\documentclass[11pt]{article}
\usepackage[T1]{fontenc}
\usepackage{lmodern}
\usepackage[a4paper,margin=22mm]{geometry}
\usepackage{amsmath,amssymb,amsthm}
\usepackage{booktabs,microtype}
\usepackage[hidelinks]{hyperref}
\usepackage{natbib}

\newcommand{\var}{\mathit{var}}
\newcommand{\cl}{\mathit{cl}}
\newcommand{\grd}{\mathit{grd}}
\newcommand{\fin}{\mathit{fin}}
\newcommand{\PSPACE}{\mathsf{PSPACE}}
\newcommand{\C}{\mathcal C}
\newcommand{\ind}{\mathbf 1}
\newtheorem{theorem}{Theorem}
\newtheorem{proposition}{Proposition}
\newtheorem{lemma}{Lemma}
\newtheorem{claim}{Claim}
\newtheorem{remark}{Remark}
\hypersetup{
  pdftitle={Subgame-Perfect Nash Equilibria of Plurality Voting with Abstention: a PSPACE hardness result for restricted ballots},
  pdfsubject={PSPACE-completeness of SPNE of Plurality Voting with Abstentions}
}
\title{Subgame-Perfect Nash Equilibria of Plurality Voting with Abstention: \\
a PSPACE-Completeness Result for Restricted Ballots}
\author{Edith Elkind\footnote{The construction in the proof was proposed by Astra. The exposition is due to EE.}\\
Northwestern University}
\date{September 2026}
\begin{document}
\maketitle
\begin{abstract}
We consider sequential Plurality elections in which each voter may abstain
or vote for a single candidate.
Each voter assigns utilities to all candidates; for each voter, this induces a (weak) order over the candidates.
Ties are resolved uniformly at random, and voting has a small positive cost, so that a voter prefers to abstain when their vote cannot change the election outcome. 
We consider a variant of this model where, for each voter, we additionally specify a prefix of her ranking, so that she is only allowed to vote for a candidate from that prefix (or abstain).
We prove that for this variant of the model, deciding whether a designated candidate is among the election winners
in a subgame-perfect equilibrium of the associated extensive-form game is $\PSPACE$-complete. This partially resolves an open problem from the work of \citet{DE10}.
\end{abstract}
\section{Background}
We start by presenting our model; our exposition follows that of \citet{DE10}.
\subsection{Basic Setup}
We consider elections with a set of candidates $\C$, $|\C|=m$, and a set of voters $N=\{v_1, \dots, v_n\}$, where each voter $v\in N$ is described 
by a utility function $U_v:\C\to\mathbb Q$, given explicitly in
binary. 

The voters submit their ballots sequentially, one by one; each voter observes all already-submitted ballots and then either votes for a single candidate in $\C$ or abstains. That is, for each voter the set of possible ballots is $\C\cup\{\bot\}$, where $\bot$ indicates abstention.
Let $b_v$ denote the ballot of voter $v\in N$. Then, for each candidate $c\in\C$, her Plurality score is computed as $s(c) = |\{v\in N: b_v=c\}|$. All candidates with the maximum Plurality score are declared to be the election winners (if all voters abstain, each candidate's Plurality score is 0, so all candidates in $\C$ are election winners); we denote the set of all election winners by $W$.

Voting carries a cost: each voter incurs a cost of $0$ if she abstains and a positive rational cost of $\varepsilon$ otherwise. Moreover, each voter assumes that the eventual winner is chosen from the winner set $W$ uniformly at random. 
Accordingly, we write
$$
U_v(W)=\frac{1}{|W|}\cdot\sum_{c\in W}U_v(c); \qquad
U^*_v(W) = U_v(W)-\varepsilon\cdot\ind_{\{b_v\ne\bot\}}; 
$$
we refer to $U_v(W)$ as the {\em value} that $v$ assigns to $W$ and to 
$U^*_v(W)$ as the {\em utility} that $v$ obtains from $W$.
Following \citet{DE10},  we assume that the utility function of each voter $v$ satisfies 
\begin{equation}\label{eq:distinct}
\frac{1}{|W|}\sum_{c\in W}U_v(c)\neq 
\frac{1}{|W'|}\sum_{c\in W'}U_v(c)
\end{equation}
for all non-empty distinct $W, W'\subseteq \C$, and, moreover, 
\begin{equation}\label{eq:cost}
  0<\varepsilon<
  \min_{\substack{v\in N,\ \varnothing\ne W,W'\subseteq\C\\W\ne W'}}
  \left|\frac{1}{|W|}\sum_{c\in W}U_v(c) -  
\frac{1}{|W'|}\sum_{c\in W'}U_v(c)\right|.
\end{equation}
In particular, each voter prefers to abstain if their vote does not change the election outcome.

\subsection{Games and Equilibria}
Assume that the voter order is $v_1, \dots, v_n$. Together with the voters' utility functions, this order induces an extensive-form game $G$ among the voters, where
a {\em pure strategy} $\gamma_i$ for a voter $v_i$ specifies a ballot $b_{v_i}\in\C\cup\{\bot\}$ given each possible {\em history} $b_{v_1}, \dots, b_{v_{i-1}}$.
A strategy profile $(\gamma_1, \dots, \gamma_n)$ induces a {\em ballot profile}
$(b_{v_1}, \dots, b_{v_n})$: for each voter $i$, her ballot $b_{v_i}$ follows
the prescription of $\gamma_i$ at history $b_{v_1}, \dots, b_{v_{i-1}}$.
In turn, a ballot profile induces a winner set $W$: $W$ consists of candidates with the highest Plurality score according to $(b_{v_1}, \dots, b_{v_n})$. Given $W$, we can compute 
the utility $U^*_v(W)$ of each voter $v\in N$.
Thus, each strategy profile is associated with a utility for each voter. 

A strategy profile is a {\em Nash equilibrium} if no voter can increase her utility by unilaterally changing her strategy. While Nash equilibrium is a standard solution concept in normal-form games, in extensive-form games one usually considers a refinement of Nash equilibrium, known as {\em subgame-perfect Nash equilibrium}. To introduce this concept in our setting, we need additional definitions.

Fix a voting game $G$, as defined in this section. For each $i\in [n]$, 
each history $b_{v_1}, \dots, b_{v_{i-1}}$ defines a {\em continuation game}
$G(b_{v_1}, \dots, b_{v_{i-1}})$; this is an extensive-form game where the set of voters
is $i, \dots, n$, but the Plurality score of each candidate is computed using both
the initial $i-1$ ballots and the $n-i+1$ ballots cast by the voters. A pure strategy
$\gamma_i$ of a voter $i$ in $G$ can be restricted to a continuation game $G(b_{v_1}, \dots, b_{v_{i'}})$ for any $i'<i$ in a natural way: for each history $b_{v_{i'+1}}, \dots, b_{v_{i-1}}$ in the continuation game, the restricted strategy of $i$ proposes 
the same action as proposed by $\gamma_i$ in $G$ for history $b_{v_1}, \dots, b_{v_{i-1}}$. 
A strategy profile $(\gamma_1, \dots, \gamma_n)$ in $G$ is a {\em subgame-perfect Nash equilibrium (SPNE)} if for every $i\in [n]$ and every history $b_{v_1}, \dots, b_{v_{i-1}}$, 
the restrictions of $\gamma_i, \dots, \gamma_n$ to the continuation game
$G(b_{v_1}, \dots, b_{v_{i-1}})$ form a Nash equilibrium of that game.

Plurality voting games considered in this paper always admit subgame-perfect Nash equilibria (SPNE); such equilibria can be computed (albeit inefficiently) by backward induction. Note that a strategy in a voting game is an exponential-sized object, as it needs to prescribe a ballot for each history. Thus, computing a SPNE is not a sensible computational problem.
However, each SPNE corresponds to a set of winners, which is a `small' object, and a natural computational problem is to compute this set.

\citet{DE10} conjecture that the following problem is $\PSPACE$-complete: given an election with a set of candidates $\C$, a set of voters $N$, a list $(u_v)_{v\in N}$ of voters' utility functions that satisfy assumptions~\eqref{eq:distinct} and~\eqref{eq:cost}, and a voting cost $\varepsilon$, determine whether a designated candidate $d$ belongs to the winner set in an SPNE of the associated extensive-form game. We note that 
under assumptions \eqref{eq:distinct} and \eqref{eq:cost} a voting game may have multiple SPNE, but they all have the same winner set (as no voter is indifferent between two winner sets). 

We consider a variant of this problem where we additionally specify, for each $v\in N$, a positive integer $k_v\in [m]$, so that each voter $v$ is only allowed to vote for one of her top $k_v$ candidates (or abstain), where it is assumed that a voter ranks all candidates by utility.
We refer to this problem as {\sc Plurality SPNE Winner with Prefix Ballots}.
%%%%%%%%%%%%%%%%%%%%%%%%%%%%%%%%%%

\section{Main Result}

\begin{theorem}\label{thm:main}
\textnormal{\textsc{Plurality SPNE Winner with Prefix Ballots}} is \(\PSPACE\)-complete.
\end{theorem}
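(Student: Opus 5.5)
Membership in $\PSPACE$ is the easy direction, and we dispose of it first. The SPNE winner set is unique under assumptions~\eqref{eq:distinct} and~\eqref{eq:cost}. We compute it by a depth-first backward induction over the game tree. The state at depth $i$ is the history, and it suffices to keep only the score vector, which has polynomial size. At each node the current voter $v_i$ tries every allowed ballot, meaning abstention or one of her top $k_{v_i}$ candidates. For each ballot we recursively compute the winner set of the resulting continuation game. The voter then picks the ballot maximizing $U^*_{v_i}$, and this choice is unique by \eqref{eq:cost}. The recursion depth is $n$, and each frame stores a score vector and the best winner set found so far. The total space is therefore polynomial, and we finally check whether $d$ belongs to the root winner set.

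For hardness we reduce from TQBF: given $\Phi=\exists x_1\forall x_2\cdots Q x_t\,\varphi$ with $\varphi$ a 3-CNF, we build an election in which $d$ wins in the SPNE iff $\Phi$ is true. We use prefix ballots to make each voter's choice essentially binary.

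\textbf{Variable gadgets.} For each variable $x_j$ we introduce candidates $x_j,\bar x_j$ and a voter $a_j$ with $k_{a_j}=2$ whose top two candidates are $x_j,\bar x_j$. Voters $a_1,\dots,a_t$ come first in order. An existential voter $a_j$ has utilities that make her want $d$ to win, and a universal voter $a_j$ wants $d$ to lose. Ties between her top two are broken by tiny utility perturbations that are negligible compared with her stake in $d$. Because both her top candidates are in the prefix, she must vote for one of them or abstain. To make abstention a dominated or irrelevant choice, we let later voters treat ``no vote'' as a fixed default literal, for example by adding a baseline vote so that abstention equals one fixed assignment. We must make sure that this does not give the universal player an extra option.

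\textbf{Clause and evaluation phase.} After the assignment phase we append, for each clause $C_\ell$, a candidate $c_\ell$ and a small team of voters whose prefixes allow voting only for $c_\ell$ or for the literal candidates in $C_\ell$. Their utilities are chosen so that in the continuation subgame, $c_\ell$ is pushed to a score above some threshold iff no literal of $C_\ell$ was chosen true. A final group of ``referee'' voters, restricted to $d$ and a rival $r$, has utilities that make them elect $d$ iff no $c_\ell$ reached the threshold. The prefix restriction is what keeps these evaluation voters from deviating to arbitrary candidates. Unrestricted deviation is exactly what makes the unrestricted problem hard to control.

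The main obstacle is the correctness of the evaluation phase under full subgame perfection. We must prove, by induction from the last voter backwards, that in every subgame reachable after the assignment phase the winner set is exactly $\{d\}$ or exactly $\{r\}$ according to whether $\varphi$ holds. We also need that voting costs and the uniform tie-breaking never induce a voter to abstain or to coordinate differently. The natural tool is to separate the scales: assignment voters have huge stakes in $d$ versus $r$, while score differences in the gadgets are kept at distinct small integers so that no unexpected ties arise. Once every post-assignment subgame yields $\{d\}$ or $\{r\}$ as a function of the assignment alone, a straightforward induction over $a_t,\dots,a_1$ shows that the existential voters pick witnesses and the universal voters pick refutations. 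Hence $d$ wins iff $\Phi$ is true.
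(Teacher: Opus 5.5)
Your membership argument is fine and is essentially the paper's depth-first backward induction. Storing only the score vector is a harmless refinement. One small correction: the optimal \emph{ballot} need not be unique, since two different votes can give the same winner set at the same cost; only the resulting winner set is unique.

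The hardness direction has a genuine gap. The clause/evaluation phase is the entire technical content of the reduction, and you never construct it. Saying that utilities ``are chosen so that'' $c_\ell$ is pushed above a threshold iff no literal is true restates the requirement rather than meeting it. Moreover, the interface you describe cannot work as stated:
\begin{itemize}
\item Only clause-team members can vote for $c_\ell$, so its score when the team moves is $0$ whatever the assignment.
\item By your description, the referees react only to whether some $c_\ell$ crossed the threshold.
\item Every outcome is $\{d\}$ or $\{r\}$, so literal candidates never enter a winner set. The team members' fixed utilities therefore reduce to a preference between $d$ and $r$.
\end{itemize}
Hence the post-assignment game, meaning the map from actions to outcomes and the preferences over those outcomes, is the same for every assignment. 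Its SPNE outcome therefore cannot depend on whether $\varphi$ holds. Strategic voters with fixed utilities can only ``read'' the assignment if it changes what they can achieve or how they value what they achieve. Your sketch provides neither, and you yourself flag the subgame-perfect analysis as ``the main obstacle'' without resolving it.

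The missing idea, which the paper supplies, is to make the assignment visible in the winning set.
\begin{itemize}
\item Each variable voter has the single permitted candidate $a_i$. Voting means $x_i$ is true and abstaining means false, which also settles your abstention worry.
\item In the ``yes'' branch the outcome is the large tie $W_\sigma=\{a_i:\sigma_i=1\}\cup\mathit{CL}\cup G\cup\{f\}$.
\item $\cl_j$'s utilities are $\nu+1+\gamma_{ij}$ on $a_i$, $\nu+1$ on $f$, and a carefully offset value on $c_j$. These give $U_{\cl_j}(W_\sigma)=(\nu+1)+\frac{z_j(\sigma)-1/2}{w+2\mu+2}$, where $z_j(\sigma)$ is the number of true literals of $C_j$. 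So $\cl_j$ prefers $W_\sigma$ to $\{f\}$ exactly when $C_j$ is satisfied.
\item Guard voters restricted to $\{g_j,f\}$, together with the final voter, force every outcome to be either $\{f\}$ or a tie containing all candidates with a vote. Any vote for $f$ or any second vote on a candidate yields $\{f\}$.
\item Any other deviation by $\cl_j$ (abstaining, or voting for some $a_i$) either yields $\{f\}$ or a tie without $c_j$ that she values below $\{f\}$. Since she can always secure $\{f\}$ by voting for $f$, she effectively chooses between $W_\sigma$ and $\{f\}$.
\end{itemize}
Without some mechanism of this kind and its full SPNE analysis, your hardness direction is not established.
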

We prove the theorem in two parts. First, we establish that our problem is in $\PSPACE$. Then we will show that it is $\PSPACE$-hard.

\begin{proposition}\label{prop:inpspace}
\textnormal{\textsc{Plurality SPNE Winner with Prefix Ballots}} is in $\PSPACE$.
\end{proposition}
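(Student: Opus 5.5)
We will show membership in $\PSPACE$ by evaluating the game through backward induction, implemented as a depth-first recursion whose depth is $n$ and whose stack frames have polynomial size. The key observation is that a voter's best response depends on the history only through the current score vector. The ballots already cast affect future play and final outcomes only via the scores $(s(c))_{c\in\C}$. So we can define a recursive procedure $\textsc{Outcome}(i,s)$ that receives an index $i\in[n+1]$ and a score vector $s\in\{0,\dots,n\}^m$, and returns the winner set $W$ reached in the continuation game when voters $v_i,\dots,v_n$ play an SPNE starting from scores $s$. The state $(i,s)$ takes $O(m\log n+\log n)$ bits, which is polynomial in the input size.

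The procedure works as follows. For $i=n+1$, $\textsc{Outcome}$ returns the set of candidates with maximum score in $s$; if all scores are $0$, this is all of $\C$. For $i\le n$, we consider each admissible ballot $b$ of $v_i$. These are $\bot$ together with the top $k_{v_i}$ candidates in $v_i$'s ranking by $U_{v_i}$. The ranking, and hence the prefix, can be computed in polynomial time, and it is well defined because by~\eqref{eq:distinct} the singleton values are distinct. For each $b$ we compute $s'$ by incrementing $s(b)$ when $b\ne\bot$, call $W_b=\textsc{Outcome}(i+1,s')$, and evaluate $U^*_{v_i}(W_b)=U_{v_i}(W_b)-\varepsilon\cdot\ind_{\{b\ne\bot\}}$. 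This is exact rational arithmetic on polynomially many polynomial-size numbers. We keep only the best ballot seen so far and its winner set, and return the winner set of the maximizer.

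For correctness we argue by induction on $n-i$ that any SPNE restricted to a continuation game at a history with score vector $s$ yields winner set $\textsc{Outcome}(i,s)$. Two facts make this go through. First, by~\eqref{eq:distinct} and~\eqref{eq:cost}, voter $v_i$'s utilities over distinct pairs (ballot, winner set) are all distinct, except possibly when two different ballots lead to the same $W$. In that case, if one of the ballots is $\bot$, abstention strictly wins. If both are votes for candidates, their utilities are equal but the resulting winner set is the same. Either way the winner set of the maximizer is uniquely determined, so ties need no care. Second, subgame perfection forces $v_i$ to choose a ballot maximizing her utility given the continuation outcomes, and by the induction hypothesis those outcomes do not depend on anything beyond the scores. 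The recursion also constructs an SPNE: let each voter play the argmax ballot at every history, which is the standard backward-induction argument. This gives existence, and it shows that the answer is ``yes'' iff $d\in\textsc{Outcome}(1,\mathbf 0)$.

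Space is bounded as follows. The recursion depth is $n+1$. Each frame stores $i$, $s$, a loop index over at most $m+1$ ballots, and the current best winner set and utility, all of polynomial size. Total space is therefore polynomial, even though the running time is exponential. The only delicate point is the claim that the winner set is independent of the tie-breaking among equal-utility ballots; this is the step to double-check, and it is handled by the observation above that equal utility forces an identical $W$.
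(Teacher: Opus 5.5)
Your proposal is correct and is the paper's argument: backward induction run as a depth-first recursion of depth $n$ with polynomial-size frames. The differences are minor. You index subgames by the score vector rather than the full history, and you show directly from \eqref{eq:distinct} and \eqref{eq:cost} that ties between equal-utility ballots cannot change the winner set, whereas the paper just appeals to all SPNE sharing the same winner set.
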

\begin{proof}
The standard algorithm for computing a subgame-perfect Nash equilibrium in an extensive-form
game is backward induction. In our setting, backward induction operates as follows.
First, for each possible history $b_{v_1}, \dots, b_{v_{n-1}}$, we compute an optimal action of voter $v_n$. Then, inductively, having computed the strategy for voter $v_{i+1}$ for all possible histories, we can compute the strategy of voter $v_i$ for all possible histories:
for a given history $b_{v_1}, \dots, b_{v_{i-1}}$, we consider all possible actions of $v_i$, 
and for each action we compute the eventual winner set given the already-computed strategies of voters $v_{i+1}, \dots, v_n$; we then pick an action that results in the highest utility, breaking ties arbitrarily (note that tie-breaking is not important, since all SPNE have the same winner set). That is, we implicitly construct the {\em game tree}, where each history is associated with a node, and the parent of a node $b_{v_1}, \dots, b_{v_i}$ is $b_{v_1}, \dots, b_{v_{i-1}}$, and traverse this tree from the leaves towards the root, computing an optimal action (and the associated winner set) at each node, and outputting the winner set at the root.

Clearly, a straightforward implementation of this algorithm requires exponential space.
However, we can explore the game tree using depth-first search, generating successor nodes on the fly.
Since the length of each history is at most $n$, and the size of the winner set is at most $m$, the algorithm uses polynomial space and correctly decides if $d$ is among the winners in some SPNE.
\end{proof}

For the hardness result, we reduce from the classic TQBF problem. An instance of this problem is given by a quantified Boolean formula
\begin{equation}\label{eq:qbf}
  \Phi=Q_1x_1\cdots Q_\nu x_\nu\,\bigwedge_{j=1}^{\mu}C_j,
\end{equation}
where $Q_i\in\{\exists,\forall\}$ for all $i\in [\nu]$, 
and each clause $C_j$, $j\in [\mu]$, is a disjunction of literals from $\{x_i, \overline{x}_i: i\in [\nu]\}$;
it is a yes-instance if $\Phi$ is true and a no-instance otherwise. This problem is $\PSPACE$-complete~\citep{SM73}.
We will additionally assume that our instance has the following properties: $\mu\ge 3$,  each clause is
nonempty, has at most three literals, and contains no variable twice; the problem is known to remain $\PSPACE$-hard under these constraints.

\subsection{Candidates, Voters, and Utilities}

Given a quantified Boolean formula $\Phi = Q_1x_1\cdots Q_\nu x_\nu\, \Psi$ with $\Psi = C_1\wedge\dots\wedge C_\mu$, we set $n=\nu +2\mu +2$, and construct an election
with $n$ voters and $m=n$ candidates.

The candidate set $\C$ consists of four components: we have
$\C=A\cup\mathit{CL}\cup G\cup\{f\}$, 
where $A = \{a_i:i\in[\nu]\}$ is the set of {\em variable candidates}, 
$\mathit{CL}=\{c_j:j\in[\mu]\}$ is the set of {\em clause candidates}, and  
$G=\{g_j:j\in[\mu+1]\}$ is the set of {\em guard candidates}; we refer to $f$ as the {\em final candidate}.
Note that $|\C|=n$. 

There are also $n$ voters: 
$\nu$ {\em variable voters} $\var_1,\ldots,\var_\nu$ (where voter $\var_i$ corresponds to the variable $x_i$), 
$\mu$ {\em clause voters} $\cl_1, \dots, \cl_\mu$ (where voter $\cl_j$ corresponds to the clause $C_j$), 
$\mu+1$ {\em guard voters} $\grd_1,\ldots,\grd_{\mu+1}$,  
and a single {\em final voter} $\fin$. They vote in this order.

The sets of permitted candidates for each voter are given in
Table~\ref{tab:actions}; later, we will specify the voters' utilities and the reader will be able to verify that each voter weakly prefers her permitted  candidates over all other candidates. Recall that each voter can also abstain.

\begin{table}[ht]
\centering
\begin{tabular}{@{}ll@{}}
\toprule
Voter & Permitted candidates\\
\midrule
$\var_i$ & $\{a_i\}$\\
$\cl_j$  & $\{c_j,f\}\cup\{a_i:x_i\in C_j\}$\\
$\grd_j$ & $\{g_j,f\}$\\
$\fin$ & $\{f\}$\\
\bottomrule
\end{tabular}
\caption{Permitted ballots.}
\label{tab:actions}
\end{table}

Before we specify the voters' utilities, we outline the key idea of the proof.

For each ballot profile of variable and clause voters, we consider the continuation
game among the guard voters and the final voter, and argue that in its SPNE
the winner set is either $\{f\}$ or a set $W$ that contains 
$f$ and all guard candidates, with all winners receiving exactly one vote; in particular, if any of the clause voters votes 
for $f$, then the outcome is $\{f\}$. 
We will then prove that $\Phi$ is false if and only if the winner set in all SPNE is $\{f\}$.

Intuitively, a variable voter $\var_i$ decides whether $x_i$ should be set to true:
voting for $a_i$ corresponds to setting $x_i$ to true and abstaining corresponds to setting $x_i$ to false. The variable voters' utilities are such that $\exists$-voters want to make $\Psi$ true, 
while $\forall$-voters want to make $\Psi$ false.
To implement this, an $\exists$-voter $\var_i$ assigns 
positive utility to all candidates other than $f$ and utility 0 to $f$ (so that $\{f\}$ is her least preferred outcome), whereas a $\forall$-voter $\var_i$ assigns similar utilities to $a_i$ and $f$, with $a_i$ being slightly more attractive, and utility 0
to other candidates (so that she prefers $\{f\}$ to a tie that involves $a_i$ and many low-utility candidates, where the utility of $a_i$ is `diluted').
Now, any ballot profile for the variable voters can be interpreted as a truth assignment:
$x_i$ is set to true if $\var_i$ votes for $a_i$ and $x_i$ is set to false if $\var_i$
abstains.

We now want a clause voter $\cl_i$ to have the following preferences: if her clause is not satisfied by the truth assignment defined by the variable voters, she prefers $f$ to the outcome that includes all candidates except for the variable candidates with zero votes (a `large tie'), and if it is satisfied, she prefers the `large tie' to $f$. In equilibrium, this results in all clause voters voting for their clause candidate if the truth assignment defined by the variable voters makes $\Psi$ true, but acting in a way that gets $f$ elected as a unique winner otherwise (e.g., some clause voter may vote for $f$ or for a variable candidate with a positive score, which then results in guard voters enforcing $\{f\}$ as the election outcome).
 We achieve this by making $\cl_i$ assign positive utility to $f$ and all candidates in $A$, with candidates associated with positive literals in clause $C_i$ having slightly higher utility; this way, if a clause is satisfied, the presence 
of positive literal candidates/absence of negative literal candidates makes a large tie more attractive than $f$, 
but if it is not satisfied, the average utility of the tie is lower, 
so $f$ becomes more attractive.

We now specify the voters' utilities so as to implement this idea. 
Let $i\in [\nu]$.
If $Q_i=\exists$, let
\[
  U_{\var_i}(a_i)=2,\qquad U_{\var_i}(f)=0,\qquad
  U_{\var_i}(c)=1 \text{ for }c\notin\{a_i,f\}.
\]
If $Q_i=\forall$, let
\[
  U_{\var_i}(a_i)=2,\qquad U_{\var_i}(f)=1,\qquad
  U_{\var_i}(c)=0 \text{ for }c\notin\{a_i,f\}.
\]
For a clause $C_j$, $j\in [\mu]$, 
let $\mathrm{neg}_j$ be the number of negative literals in $C_j$, and set
\[
  \gamma_{ij}=
  \begin{cases}
    1,&x_i\in C_j,\\
    -1,&\neg x_i\in C_j,\\
    0,&\text{otherwise}.
  \end{cases}
\]
The utility of voter $\cl_j$ is then given by
\begin{equation}\label{eq:clause-utilities}
\begin{aligned}
  U_{\cl_j}(a_i)&=\nu+1+\gamma_{ij},&
  U_{\cl_j}(c_j)&=(2\mu+1)(\nu+1)+\mathrm{neg}_j-\tfrac12,&
  U_{\cl_j}(f)&=\nu+1,
\end{aligned}
\end{equation}
and 0 for every other clause candidate and every $g\in G$.
For a guard voter $\grd_j$ with $j\in[\mu+1]$, let
\[
  U_{\grd_j}(g_j)=2n,\qquad U_{\grd_j}(f)=1,\qquad
  U_{\grd_j}(c)=0\text{ for }c\notin\{g_j,f\}.
\]
Finally, let $U_{\fin}(f)=1$ and $U_{\fin}(c)=0$ for $c\ne f$.

We note that these utilities do not satisfy assumption
\eqref{eq:distinct}. We will explain how to fix this issue towards the end of the proof.

Let $\varepsilon=\frac{1}{n^2}$ and let 
the designated candidate be $g_1$.
To complete the proof, 
we will argue that if $\Phi$ is false then in any SPNE of our game the winner set 
is $\{f\}$, and if it is true then in any SPNE of our game the winner set contains all candidates in $G$. We split the proof into a sequence of lemmas.

\subsection{Key Lemmas}

Our first lemma characterizes the equilibrium behavior of the guard voters.

\begin{lemma}\label{lem:guard}
Consider any sequence of ballots $b_{\var_1}, \dots, b_{\var_\nu}, b_{\cl_1}, \dots, b_{\cl_\mu}$, and for each candidate $c\in\C$ let $s'(c)$ denote the Plurality score 
that $c$ obtains from these ballots. Then in any SPNE of 
$G(b_{\var_1}, \dots, b_{\var_\nu}, b_{\cl_1}, \dots, b_{\cl_\mu})$
the winner set $W$ is of the following form:
\[
  \begin{cases}
    \{f\},& \text{if }s'(f)\geq1\text{ or } s'(c)\geq2\text{ for some $c\neq f$},\\
    \{c:s'(c)=1\}\cup G \cup \{f\}, & \text{otherwise}.
  \end{cases}
\]
In the second case, each guard voter $\grd_j$ votes for $g_j$ and
the final voter $\fin$ votes for $f$.
\end{lemma}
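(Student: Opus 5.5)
The plan is backward induction over the last $\mu+2$ voters $\grd_1,\dots,\grd_{\mu+1},\fin$. Two preliminary observations drive it. First, no variable or clause voter may vote for a guard candidate, so $s'(g_j)=0$ for all $j$, and throughout the continuation each $g_j$ has score at most $1$. Second, by Table~\ref{tab:actions} the variable voters and the clause voters cast at most one vote each, and a clause voter who votes for $f$ does not vote for any $a_i$. Hence $\max_{c\neq f}s'(c)\le \mu+1$; more precisely, $\max_{c\ne f}s'(c)-s'(f)\le \mu+1$.

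\textbf{Step 1: the final voter.} Given current scores $s$, adding a vote to $f$ changes the winner set only if $s(f)\ge \max_{c}s(c)-1$ and $\{f\}$ is not already the winner set. In that case $\fin$'s value rises from $1/|W|$ (or $0$) to either $1$ or $1/(|W|+1)$, a gain of at least $1/n>\varepsilon=1/n^2$. So $\fin$ votes for $f$ exactly in that case and abstains otherwise.

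\textbf{Step 2: a strengthened inductive claim for the guards.} For the state just before $\grd_j$, let $k=\mu+3-j$ be the number of remaining voters, and define the deficit $D=\max_{c\ne f}s(c)-s(f)$. Call a state \emph{bad} if $s(f)\ge1$ or some $c\ne f$ has $s(c)\ge2$, and \emph{good} otherwise. I would prove by downward induction on $j$:
\begin{itemize}
\item[(a)] In every bad state with $D\le k-1$, the SPNE outcome is $\{f\}$, and every successor state reached by a deviation of $\grd_j$ is again bad with deficit at most $k-2$.
\item[(b)] In every good state, each remaining guard votes for its own $g_{j'}$, $\fin$ votes for $f$, and the outcome is $\{c:s(c)=1\}\cup\{g_j,\dots,g_{\mu+1}\}\cup\{f\}$, where guards already processed are counted through $s$.
\end{itemize}
The base case is $\fin$, handled in Step 1. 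In a good state all scores are at most $1$ and $s(f)=0$, so $\fin$'s vote joins $f$ to the all-ones tie. Applying the claim at $j=1$, together with the bound $D\le\mu+1=k-1$ from the preliminary observations, gives the lemma.

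\textbf{Step 3: the guard's choice.} Guard $\grd_j$ values any winner set containing $g_j$ at least $2n/n=2>1$. It values any winner set not containing $g_j$ at most $1$, with equality exactly for $\{f\}$, and any difference between such values exceeds $\varepsilon$ by the choice $\varepsilon=1/n^2$.
\begin{itemize}
\item In a good state, voting for $g_j$ keeps the state good, so by (b) the outcome is the big tie containing $g_j$. Abstaining also keeps the state good, giving a tie without $g_j$. Voting for $f$ makes the state bad, with deficit at most $1\le k-2$, so by (a) it gives $\{f\}$. Voting for $g_j$ is therefore strictly best, which establishes (b).
\item In a bad state, $g_j$ can never become a winner: either some score is at least $2$, or $s(f)\ge1$ and the remaining voters produce $\{f\}$. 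So $\grd_j$'s best achievable outcome is $\{f\}$. To show (a), I would check that abstaining or voting for $g_j$ leaves the state bad, while voting for $f$ lowers $D$ by one. Whichever action keeps the invariant $D\le k-2$ then yields $\{f\}$ by induction. The guard abstains if that already suffices, since abstaining is cheaper, and otherwise votes for $f$.
\end{itemize}

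\textbf{Main obstacle.} The delicate step is the counting in (a). I must verify that a voter who can still secure $\{f\}$ never lets the deficit exceed the number of remaining $f$-votes. I must also handle the edge cases: $\max_{c\ne f}s(c)$ rising when some $g_j$ reaches $1$ while all other scores are $0$ and $s(f)\ge1$, and the deficit $D=0$ tie, where $\fin$ must break the tie. I would first try the invariant $D\le k-1$ stated above and check these cases directly against Step 1.
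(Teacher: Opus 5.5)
Your proposal is correct and is essentially the paper's backward-induction argument. The paper's case split on $s'$ (some $a_i$ with at least two votes; $s'(f)\ge 2$; $s'(f)=1$; $s'(f)=0$) becomes a single downward induction over good states and bad states with deficit $D\le k-1$. This tracks only the outcome, not the paper's exact count of abstaining voters. Two small repairs are needed:
\begin{itemize}
\item The second clause of (a) is false as stated: from a bad state with $D=k-1$, abstaining leaves deficit $k-1$. You never use it, though. Only the successor reached by voting for $f$ needs deficit at most $k-2$. And when $s(f)\ge 1$ with all other scores at most $1$, every successor has deficit at most $0$, which is all your ``$g_j$ cannot win'' step requires. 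So drop that clause.
\item A guard voting for $f$ from a good state creates deficit at most $0$, not $1$. This matters at $j=\mu+1$, where $k-2=0$ and your bound ``$1\le k-2$'' fails.
\end{itemize}
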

\begin{proof}
 Observe that each variable candidate receives at most one vote from the respective variable voter and at most $\mu$ votes from clause voters, whereas each clause candidate receives at most one vote from the respective clause voter; the final candidate $f$ receives at most $\mu$ votes from the clause voters.
 Hence, we have $s'(a)\leq \mu+1$ for all $a\in A$, $s'(c)\leq 1$ for all $c\in\mathit{CL}$,
and $s'(g)=0$ for all $g\in G$. In particular,
no candidate gets more than $\mu+1$ votes from the first $\nu+\mu$ ballots.

Suppose first that some candidate in $A$ gets at least two votes from the first $\nu+\mu$ ballots, and let $s = \max\{s'(c): c\in A\}$. In this case, no $g\in G$ can ever be a winner,
since it can receive at most one vote. Moreover, no candidate in $A\cup\mathit{CL}$ can get additional votes. Finally, all guard voters as well as the final voter prefer $f$ to all candidates in $A\cup\mathit{CL}$, and if they all vote for $f$, then $f$ gets $\mu+2>s$ points. Consequently, in equilibrium
the first $\mu+2-\max\{s+1-s'(f), 0\}$ of these voters will abstain, and the remaining  
voters (if any) will all vote for $f$, so that $f$ gets $\max\{s'(f), s+1\}$ votes and becomes the unique winner.  

Now, suppose that $s'(c)\le 1$ for each $c\in A\cup\mathit{CL}$. If $s'(f)\ge 2$, all remaining voters will abstain, and $f$ will become the unique winner. If $s'(f)=1$, all guard voters will abstain, and $\fin$ will abstain (if no other candidate has positive score) or vote for $f$; again, $f$ becomes the unique winner.

Finally, suppose that $s'(f)=0$.
 Note that $\fin$ votes for $f$ if $f$ receives no votes from the guard voters (as this places $f$ in the winner set, or makes it the unique winner if all previous voters abstain) or if $f$ receives exactly one vote from the guard voters and at least one other candidate has a positive Plurality score (as this makes $f$ the unique winner instead of being tied with other candidates).
 If $f$ receives two or more votes from the guard voters or if $f$ is the only candidate with a positive score, $\fin$
 abstains: no candidate other than $f$ gets more than one vote, so $f$ is the unique winner anyway, and $\fin$ prefers not to incur the cost of voting.
We now claim that each guard voter $\grd_j$ abstains 
if any of the earlier guard voters votes for $f$ and votes for $g_j$ otherwise. Indeed, it is immediate that $\grd_j$ abstains if $f$ receives a vote before $\grd_j$'s turn, as $\fin$ would then ensure that $f$ is the unique winner anyway, so voting for $g_j$ wastes the voting cost $\varepsilon$.
The second part of our claim follows by backward induction.
For the base case, note that $\grd_{\mu+1}$ votes for $g_{\mu+1}$ if $f$ receives no votes, as this guarantees that $g_{\mu+1}$ will be in the winner set (this gives $\grd_{\mu+1}$ a utility of at least $2-\varepsilon$, which is higher than her utility from any winner set that does not contain $g_{\mu+1}$), whereas abstaining or voting for $f$ does not.
Inductively, consider a voter $\grd_j$, $j\in [\mu]$. If $f$ receives no votes from the earlier guard voters, $\grd_j$ should vote for $g_j$: by the inductive hypothesis, all subsequent guard voters will vote for their top candidates and $g_j$ will be in the winner set---an outcome $\grd_j$ prefers to any outcome where $g_j$ is not in the winning set.
\end{proof}

Our second lemma argues that, in an SPNE, a clause voter either votes so that
the eventual outcome is $f$ or she votes for $c_j$.

\begin{lemma}\label{lem:clause}
Let $j\in [\mu]$, and
consider a sequence of ballots ${\mathbf b} = b_{\var_1}, \dots, b_{\var_\nu}, b_{\cl_1}, \dots, b_{\cl_{j}}$.
Let $W$ be the winner set in an SPNE of the continuation game $G(\mathbf{b})$. Then
\begin{itemize}
\item[(1)] if $b_{\cl_j} = f$ then $W = \{f\}$ and  $U^*_{\cl_j}(W) = \nu+1-\varepsilon$. 
\item[(2)]
If $b_{\cl_j} =c_j$ then either 
$W = \{f\}$ or $G\cup\{c_j, f\}\subseteq W$.
\item[(3)]
If $b_{\cl_j} \neq f, c_j$
then either $W=\{f\}$ or $U^*_{\cl_j}(W) < \nu+1-\varepsilon$.
\end{itemize}
\end{lemma}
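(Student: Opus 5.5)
The proof will rest on one structural observation, after which the three parts follow by short case checks: Lemma~\ref{lem:guard} pins down the final outcome, while the intermediate clause voters $\cl_{j+1},\dots,\cl_\mu$ can only add votes.

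\textbf{Monotonicity.} First I would note that once a candidate of $A\cup\{f\}$ has score $\ge 2$, or $f$ has score $\ge 1$, after the first $\nu+\mu$ ballots, the outcome is $\{f\}$ by Lemma~\ref{lem:guard}. Since scores never decrease, any condition triggering the first case of Lemma~\ref{lem:guard} that already holds after $\cl_j$'s ballot still holds after $\cl_\mu$'s ballot, whatever the later clause voters do.

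\textbf{Part (1).} If $b_{\cl_j}=f$, then $s'(f)\ge 1$ after all clause ballots. Lemma~\ref{lem:guard} gives $W=\{f\}$, so $\cl_j$ obtains $U_{\cl_j}(f)-\varepsilon=\nu+1-\varepsilon$.

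\textbf{Part (2).} If $b_{\cl_j}=c_j$, the scores $s'$ at the end of the clause phase have $s'(c_j)=1$. Only $\cl_j$ may vote for $c_j$, so this score stays exactly $1$. If the first case of Lemma~\ref{lem:guard} applies, then $W=\{f\}$. Otherwise $W=\{c:s'(c)=1\}\cup G\cup\{f\}$, which contains $c_j$, $f$ and all of $G$.

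\textbf{Part (3).} Here $b_{\cl_j}=a_i$ for some $i$ with $x_i\in C_j$, or $b_{\cl_j}=\bot$. Suppose $W\neq\{f\}$. By Lemma~\ref{lem:guard}, every candidate then has final clause-phase score at most one, $s'(f)=0$, and $W=S\cup G\cup\{f\}$ with $S=\{c:s'(c)=1\}$. Note $c_j\notin S$, since $\cl_j$ did not vote for $c_j$ and nobody else can. So $S\subseteq A\cup(\mathit{CL}\setminus\{c_j\})$.

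Next I bound $U_{\cl_j}(W)$. The candidates $G$, and clause candidates other than $c_j$, contribute $0$. Each $a\in S\cap A$ contributes at most $\nu+2$, and $f$ contributes $\nu+1$. Writing $k=|S\cap A|$, the tie has size $|W|\ge k+\mu+2$, so
\[
U_{\cl_j}(W)\le\frac{k(\nu+2)+\nu+1}{k+\mu+2}.
\]
Since $\mu\ge3$, I expect this is strictly less than $\nu+1-\varepsilon$: the numerator is at most $(k+1)(\nu+1)+k$, and the denominator exceeds $k+1$ by $\mu+1\ge4$. The inequality is clear when $k$ is small relative to $\nu$. I need to check that $k\le\nu$ keeps it safe: the value is at most roughly $(\nu+1)\cdot\frac{k+1}{k+\mu+2}+\frac{k}{k+\mu+2}$, which falls short of $\nu+1$ by at least $\frac{(\mu+1)(\nu+1)-k}{k+\mu+2}\ge\frac{\mu\nu+\mu+1}{2\nu+2}\ge 1>\varepsilon$.

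If $\cl_j$ voted, subtract a further $\varepsilon$, which only strengthens the inequality. Hence $U^*_{\cl_j}(W)<\nu+1-\varepsilon$.

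The main obstacle is making this averaging estimate airtight uniformly in $k$. The lower bound $|W|\ge k+\mu+2$, coming from all $\mu+1$ guards plus $f$, is what makes it work.
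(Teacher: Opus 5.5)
Your proof is correct and follows the paper's argument: you apply Lemma~\ref{lem:guard} to the final clause-phase scores for parts (1) and (2), and for part (3) you bound the average using $|W|\ge |A\cap W|+\mu+2$, bounding every variable candidate by $\nu+2$ where the paper tracks the at most three positive-literal ones (equally valid, since $k\le\nu$). One small slip: at $k=\nu$ the denominator is $\nu+\mu+2$, not $2\nu+2$, so your intermediate step as written needs $\mu\le\nu$; the conclusion still stands because $\frac{\mu\nu+\mu+1}{\nu+\mu+2}\ge 1$ holds whenever $\nu(\mu-1)\ge 1$.
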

\begin{proof}
The first claim follows directly from Lemma~\ref{lem:guard}.
For the second claim, suppose that $W\neq\{f\}$. Then by Lemma~\ref{lem:guard} we have $G\cup\{f\}\subseteq W$, and, moreover,
$W$ contains every candidate that receives at least one Plurality vote;
this includes $c_j$. 
Finally, suppose that $\cl_j$ votes for some $a_i$ with $x_i\in C_j$ or abstains. Then, by Lemma~\ref{lem:guard}, 
the outcome is either $\{f\}$ 
or a winner set $W$ with $c_j\not\in W$ (as no voter other than $\cl_j$ can vote for $c_j$) and $G\cup\{f\}\subseteq W$. 
In the latter case, let $p_j$ denote the number of candidates $a_i$ in $A\cap W$ such that $x_i\in C_j$; note that $p_j\le 3$. Then $|W|\ge |A\cap W|+|G|+1$ and 
\begin{align*}
U^*_{\cl_j}(W) - (\nu+1) +\varepsilon &\le\frac{p_j\cdot (\nu+2)+(|A\cap W|-p_j)\cdot(\nu+1)+(\nu+1)}{|A\cap W|+|G|+1} - (\nu+1) + \varepsilon\\
&= 
\frac{p_j+(|A\cap W|+1-|A\cap W|-\mu-2)\cdot(\nu+1)}{|A\cap W|+\mu+2}  +\varepsilon < 0, 
\end{align*}
where the final transition follows from $\mu\ge 3$ and our choice of $\varepsilon$.
\end{proof}

Lemma~\ref{lem:clause} does not fully describe the behavior 
of the clause voters; in particular, it does not explain how a clause voter's
behavior depends on whether her clause is true or false under the assignment
defined by the choices of the variable voters. We will now focus on this aspect
of the clause voters' choices.

Fix an assignment $\sigma\in\{0,1\}^\nu$, and write
\[
  W_\sigma=\{a_i:\sigma_i =  1\}\cup\mathit{CL}\cup G \cup\{f\}, \qquad w = |\{a_i:\sigma_i =  1\}|. 
\]
Given a quantifier-free formula $\Xi$ over variables $x_1, \dots, x_\nu$, we write $\sigma\models\Xi$ if 
$\sigma$ satisfies $\Xi$ and $\sigma\not\models\Xi$ otherwise.

\begin{lemma}\label{lem:sigmavsf}
Consider an assignment $\sigma\in\{0,1\}^\nu$ and a $j\in [\mu]$. If $\sigma\models C_j$ then $U_{\cl_j}(W_\sigma) - U_{\cl_j}(\{f\})>\epsilon$ 
and if $\sigma\not\models C_j$
then $U_{\cl_j}(\{f\}) - U_{\cl_j}(W_\sigma) >\epsilon$.
\end{lemma}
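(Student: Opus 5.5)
The plan is a direct computation: write $U_{\cl_j}(W_\sigma)$ in closed form, subtract $U_{\cl_j}(\{f\})=\nu+1$, and show that the sign of the difference is exactly the truth value of $C_j$ under $\sigma$, with magnitude bounded below by something larger than $\varepsilon=1/n^2$.

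First, $|W_\sigma| = w+\mu+(\mu+1)+1 = w+2\mu+2$. Next, sum the utilities of $\cl_j$ over $W_\sigma$:
\begin{itemize}
\item the variable candidates in $W_\sigma$ contribute $(\nu+1)w+\sum_{i:\sigma_i=1}\gamma_{ij}$;
\item among the clause candidates only $c_j$ contributes, namely $(2\mu+1)(\nu+1)+\mathrm{neg}_j-\tfrac12$;
\item the guard candidates contribute $0$;
\item $f$ contributes $\nu+1$.
\end{itemize}
The multiples of $\nu+1$ add up to $(\nu+1)(w+2\mu+2)=(\nu+1)|W_\sigma|$; the constant $(2\mu+1)(\nu+1)$ in $U_{\cl_j}(c_j)$ was presumably chosen for exactly this. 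Therefore
\[
U_{\cl_j}(W_\sigma)-U_{\cl_j}(\{f\}) = \frac{\sum_{i:\sigma_i=1}\gamma_{ij}+\mathrm{neg}_j-\tfrac12}{w+2\mu+2}.
\]

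The key combinatorial step is to interpret the numerator. Let $t_j$ be the number of literals of $C_j$ that are true under $\sigma$. A positive literal $x_i\in C_j$ adds $+1$ to $\sum_{i:\sigma_i=1}\gamma_{ij}$ exactly when it is true. A negative literal $\neg x_i$ adds $-1$ exactly when it is false, and $\mathrm{neg}_j$ adds $+1$ for every negative literal. Together, each negative literal contributes $1$ if true and $0$ if false. Since no variable occurs twice in a clause, these contributions do not interact, so the numerator equals $t_j-\tfrac12$.

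Hence, if $\sigma\models C_j$ then $t_j\ge1$ and the difference is at least $\frac{1}{2(w+2\mu+2)}$. If $\sigma\not\models C_j$ then $t_j=0$ and the difference equals $-\frac{1}{2(w+2\mu+2)}$. Finally, since $w\le\nu$ we have $w+2\mu+2\le n$, so the magnitude is at least $\frac{1}{2n}$. This exceeds $\varepsilon=\frac1{n^2}$ whenever $n>2$, which holds because $\mu\ge3$.

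I do not expect a real obstacle. The only step needing care is the bookkeeping of the numerator: checking that the $(\nu+1)$ terms cancel exactly against $|W_\sigma|$, and that the $\mathrm{neg}_j$ offset converts the signed sum into the count of true literals.
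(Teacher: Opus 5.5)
Your proposal is correct and matches the paper's proof. Both show that the $(\nu+1)$-multiples cancel against $|W_\sigma|=w+2\mu+2$, and that $\mathrm{neg}_j+\sum_{i:\sigma_i=1}\gamma_{ij}$ counts the true literals of $C_j$ (your $t_j$, the paper's $z_j(\sigma)$), which gives $U_{\cl_j}(W_\sigma)-(\nu+1)=\frac{t_j-1/2}{w+2\mu+2}$. Your explicit estimate $\frac{1}{2(w+2\mu+2)}\ge\frac{1}{2n}>\frac{1}{n^2}=\varepsilon$ also fills in a step the paper leaves implicit.
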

\begin{proof}
Fix an assignment $\sigma$ and a $j\in [\mu]$.
let $z_j(\sigma)$ be the number of satisfied literals in $C_j$ under $\sigma$.
We have 
\begin{align*}
z_j(\sigma) & =|\{a_i\in A: \overline{x}_i\in C_j, \sigma_i=0\}| + |\{a_i\in A: x_i\in C_j, \sigma_i=1\}| \\
&= 
|\{a_i\in A: \overline{x}_i\in C_j\}| - |\{a_i\in A: \overline{x}_i\in C_j, \sigma_i=1\}| + |\{a_i\in A: x_i\in C_j, \sigma_i=1\}|\\
&= \mathrm{neg}_j+\sum_{i:\sigma_i=1}\gamma_{ij}, 
\end{align*}
and
\begin{align*}
  U_{\cl_j}(W_\sigma)
  &=\frac{w\cdot (\nu+1)+\sum_{i:\sigma_i=1}\gamma_{ij}+(2\mu+1)(\nu+1)+\mathrm{neg}_j-\tfrac12+(\nu+1)}
         {w+2\mu+2}\\
  &=(\nu+1)+\frac{1}{w+2\mu+2}\cdot(z_j(\sigma)-\tfrac12).
\end{align*}
Recall that $U_{\cl_j}(\{f\})=\nu+1$.
Moreover, $z_j(\sigma)-\frac12\ge \tfrac12$  if $\sigma\models C_j$ and $z_j(\sigma)-\frac12 = -\tfrac12$  if $\sigma\not\models C_j$.
Hence $|U_{\cl_j}(W_\sigma)-U_{\cl_j}(\{f\})| > \varepsilon$, and 
$U_{\cl_j}(W_\sigma)-U_{\cl_j}(\{f\})>0$
if and only if $\sigma\models C_j$. 
\end{proof}

We are now ready to describe the collective behavior of the clause voters.

\begin{lemma}\label{lem:formula}
Consider a sequence of ballots 
${\mathbf b} = b_{\var_1}, \dots, b_{\var_\nu}$, and
define an assignment $\sigma\in\{0, 1\}^\nu$ by setting $\sigma_i=1$ if $b_{\var_i}=a_i$ 
and $\sigma_i=0$ otherwise.
Let $W$ be the winner set in an SPNE of the continuation game
$G({\mathbf b})$.
Then $W=W_\sigma$ if $\sigma\models \Psi$ and $W=\{f\}$ if $\sigma\not\models\Psi$.
\end{lemma}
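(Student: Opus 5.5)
We argue by backward induction over the clause voters $\cl_\mu,\dots,\cl_1$, proving a stronger statement that also covers histories in which earlier clause voters have already deviated. Fix $\mathbf b$ and $\sigma$. For $0\le j\le\mu$ and a history $\mathbf h$ extending $\mathbf b$ by ballots of $\cl_1,\dots,\cl_j$, call $\mathbf h$ \emph{clean} if every one of $\cl_1,\dots,\cl_j$ voted for her own clause candidate. Note that Lemma~\ref{lem:guard} gives $W=\{f\}$ once $f$ has a vote or some candidate has two votes. The inductive claim for index $j$ is:
\begin{itemize}
\item[(a)] if $\mathbf h$ is clean, the SPNE winner set of $G(\mathbf h)$ is $W_\sigma$ when $\sigma\models C_{j+1}\wedge\dots\wedge C_\mu$, and $\{f\}$ otherwise;
\item[(b)] if $\mathbf h$ is not clean, the winner set is either $\{f\}$ or a set not containing $c_{j'}$ for some $j'\le j$.
\end{itemize}
The lemma is (a) at $j=0$. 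For $j=\mu$, (a) follows from Lemma~\ref{lem:guard}, since on a clean history the positive-score candidates are exactly $\{a_i:\sigma_i=1\}\cup\mathit{CL}$, each with one vote and $s'(f)=0$.

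For the inductive step, consider $\cl_{j+1}$ at a clean history $\mathbf h$ of length $j$, and compare her options using Lemma~\ref{lem:clause}. Voting $f$ yields exactly $\nu+1-\varepsilon$. Any ballot other than $f$ or $c_{j+1}$ yields either $\{f\}$ with no voting cost (utility $\nu+1$, when she abstains) or utility below $\nu+1-\varepsilon$. Voting $c_{j+1}$ keeps the history clean, so by induction it yields $W_\sigma$ or $\{f\}$, according to whether the remaining clauses $C_{j+2},\dots,C_\mu$ are satisfied.

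We then split into cases. If $\sigma\models C_{j+1}\wedge\dots\wedge C_\mu$, voting $c_{j+1}$ yields $W_\sigma$. By Lemma~\ref{lem:sigmavsf}, $U_{\cl_{j+1}}(W_\sigma)-\varepsilon>\nu+1$, which beats every alternative, so the outcome is $W_\sigma$. Otherwise, voting $c_{j+1}$ yields either $\{f\}$ at cost $\varepsilon$, or $W_\sigma$ with $\sigma\not\models C_{j+1}$, which is worse than $\{f\}$ by Lemma~\ref{lem:sigmavsf}. Every alternative likewise yields $\{f\}$ or something worse, and abstaining yields $\{f\}$ or a worse set. Hence, whatever $\cl_{j+1}$ chooses, the SPNE value forces the outcome $\{f\}$, since every non-$\{f\}$ outcome is strictly worse for her than some available option yielding $\{f\}$ (voting $f$).

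The main obstacle is controlling the non-clean branches, that is, establishing (b) and the claim that deviations never produce an outcome $\cl_{j+1}$ prefers. Abstaining or voting for some $a_i$ leaves $c_{j+1}$ with zero votes forever, so by Lemma~\ref{lem:guard} the outcome is $\{f\}$ or excludes $c_{j+1}$. In the latter case, Lemma~\ref{lem:clause}(3) gives utility below $\nu+1-\varepsilon$, and voting $f$ dominates. This gives (b) directly.

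There is one more subtlety. When the remaining formula is satisfied but the later clause voters would force $\{f\}$, we must check that $\cl_{j+1}$ cannot steer to a better non-$\{f\}$ set. Lemma~\ref{lem:clause}(3) already rules this out, because every non-$c_{j+1}$ option is worth at most $\nu+1$. Thus the case analysis closes, and induction down to $j=0$ proves the lemma.
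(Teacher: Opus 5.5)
Your proof is correct, but it is organized differently from the paper's. You run one backward induction over the clause voters. Its strengthened hypothesis (a) covers every clean prefix, including off-path ones where the voter of an unsatisfied clause still voted for her clause candidate. The conclusion of (a) depends only on the suffix $C_{j+1}\wedge\dots\wedge C_\mu$, so at $j=0$ you get both directions of the lemma at once.

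The paper splits the argument into three steps:
\begin{itemize}
\item It first shows $W\in\{\{f\},W_\sigma\}$ by a one-shot deviation on the equilibrium path: a clause voter who abstains or votes for some $a_i$ in a non-$\{f\}$ outcome would switch to $f$, by Lemma~\ref{lem:clause}(3).
\item It handles $\sigma\not\models\Psi$ with one more deviation: the voter of an unsatisfied clause switches from $c_j$ to $f$, by Lemma~\ref{lem:sigmavsf}.
\item It uses backward induction only when $\sigma\models\Psi$. There it suffices to follow the all-clean path on which every remaining clause is satisfied.
\end{itemize}
The paper's route is more economical, since it never has to describe off-path clean histories containing unsatisfied clauses. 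Yours is more uniform and also tells you the outcome at every clean prefix.

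Part (b) of your hypothesis is not needed. It follows immediately from Lemma~\ref{lem:guard}, and every deviation of $\cl_{j+1}$ is already handled by Lemma~\ref{lem:clause}(3).

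There are two harmless wording slips. First, a vote for some $a_i$ that leads to $\{f\}$ is worth exactly $\nu+1-\varepsilon$, a case your dichotomy for ballots other than $f$ and $c_{j+1}$ omits. This does not matter, because you only use the bound $\nu+1$. Second, in your last paragraph, ``the remaining formula is satisfied but the later clause voters would force $\{f\}$'' contradicts (a). You presumably mean that $C_{j+1}$ is satisfied but some later clause is not, and your second case already covers that.
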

\begin{proof}
First, we will argue that either $W=\{f\}$ or $W=W_\sigma$. Indeed, suppose that $W\neq\{f\}$. By Lemma~\ref{lem:guard}, 
it follows that no clause voter votes for $f$. Moreover, if some clause voter votes for 
a candidate $a\in A$ or abstains, then, by Lemma~\ref{lem:clause}, she can benefit from voting for $f$ (which would make $f$ the unique winner). Hence, every clause voter votes for their respective clause candidate, and therefore $W=W_\sigma$. 

We consider two cases.
\begin{itemize}
\item[$\sigma\not\models\Psi$] Let $C_j$ be a clause that is not satisfied under $\sigma$. 
Suppose for the sake of contradiction that $W=W_\sigma$. We then have $c_j\in W$. This means that voter $\cl_j$ votes for $c_j$, as no other voter can vote for $c_j$. But since 
$C_j$ is not satisfied, 
it follows from Lemma~\ref{lem:sigmavsf} that voter $\cl_j$ would benefit from changing her vote to $f$, thereby making $f$ the unique winner. 
Hence, $W\neq W_\sigma$.

\item[$\sigma\models\Psi$] 
 We will argue that every clause voter votes for her clause candidate; by Lemma~\ref{lem:guard}, this implies $W=W_\sigma$.

We proceed by backward induction on $j$: for $j=\mu, \dots, 1$, 
we show that 
if the first $j-1$ clause voters vote for their respective clause candidates then in every 
SPNE of the resulting continuation game voter $\cl_j$ also votes for her clause candidate $c_j$.

For $j=\mu$,  $\cl_\mu$ can ensure that the winner set is $W_\sigma$
by voting for $c_\mu$. Since $\sigma$ satisfies $C_\mu$, Lemma~\ref{lem:sigmavsf} implies that this is more attractive than $\{f\}$, which, in turn (by Lemma~\ref{lem:clause}) is more attractive than any other outcome that can be achieved by voting for a candidate in $A$ or abstaining. 

Now, suppose we have established our claim for $j+1$. Then if the clause voter $\cl_j$ were to vote for $c_j$, we would be in the situation captured by the inductive hypothesis and hence all subsequent clause voters would vote for their clause candidates
and the outcome would be $W_\sigma$. 
Otherwise, the winner set would be either $\{f\}$ or a set $W'$ such that $\cl_j$ prefers $\{f\}$ to $W'$ (by Lemma~\ref{lem:clause}).
By Lemma~\ref{lem:sigmavsf},  $\cl_j$ prefers $W_\sigma$ to $\{f\}$ when $C_j$ is satisfied, so she chooses to vote for~$c_j$.

This completes the inductive step, so 
we have argued that in an SPNE 
of $G({\mathbf b})$ every clause voter votes for her clause candidate. By Lemma~\ref{lem:guard}, this implies $W=W_\sigma$.
\end{itemize}
\end{proof}

Finally, we analyze the behavior of the variable voters. For each $i\in [\nu]$ we say that $\var_i$ is an {\em $\exists$-voter} if $Q_i=\exists$ and a {\em $\forall$-voter} if $Q_i=\forall$.

\begin{lemma}\label{lem:var}
   For each assignment $\sigma\in\{0,1\}^\nu$ and each $i\in [\nu]$, if $\var_i$ is an $\exists$-voter then 
   $U_{\var_i}(W_\sigma)>U_{\var_i}(\{f\})+\varepsilon$ and if $\var_i$ is a $\forall$-voter then 
   $U_{\var_i}(\{f\}) > U_{\var_i}(W_\sigma)+\varepsilon$.
\end{lemma}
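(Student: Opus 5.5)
This is a direct computation of the value $U_{\var_i}(W_\sigma)$, split according to the quantifier $Q_i$. In both cases $W_\sigma$ contains $f$, all of $\mathit{CL}$, all of $G$, and exactly the $w$ variable candidates with $\sigma_i=1$. So $|W_\sigma| = w+2\mu+2$, where $0\le w\le\nu$. This gives $|W_\sigma|\le \nu+2\mu+2=n$, and also $|W_\sigma|\ge 2\mu+2\ge 8$ because $\mu\ge 3$. The plan is to compute the average utility exactly in each case and compare it to $U_{\var_i}(\{f\})$, using $\varepsilon=1/n^2$ for the margin.

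\emph{$\exists$-voters.} Here $U_{\var_i}(\{f\})=0$. Every candidate other than $f$ has utility at least $1$, and $f$ has utility $0$. Hence
\[
U_{\var_i}(W_\sigma)\ \ge\ \frac{|W_\sigma|-1}{|W_\sigma|}\ \ge\ \frac{7}{8}.
\]
This is far larger than $\varepsilon=1/n^2$, which gives the first claim. The exact value is $(|W_\sigma|-1+\sigma_i)/|W_\sigma|$, but the lower bound suffices.

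\emph{$\forall$-voters.} Here $U_{\var_i}(\{f\})=1$. The only candidates in $W_\sigma$ with positive utility are $f$ (utility $1$) and, when $\sigma_i=1$, also $a_i$ (utility $2$). Every other member has utility $0$. Therefore
\[
U_{\var_i}(W_\sigma)=\frac{1+2\sigma_i}{|W_\sigma|}\le \frac{3}{2\mu+2}\le \frac{3}{8}.
\]
It follows that $U_{\var_i}(\{f\})-U_{\var_i}(W_\sigma)\ge 5/8>\varepsilon$, which gives the second claim.

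The only point needing care is to use the lower bound $|W_\sigma|\ge 2\mu+2$, which holds because all clause and guard candidates lie in $W_\sigma$ regardless of $\sigma$. With that bound the margins are constants and dwarf $\varepsilon=1/n^2$. I do not expect a genuine obstacle; the main risk is miscounting $|W_\sigma|$, for instance forgetting that $G$ has $\mu+1$ elements.
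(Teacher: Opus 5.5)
Your proof is correct and takes essentially the same route as the paper: a direct bound on the average utility of $W_\sigma$ using that $\mathit{CL}\cup G\cup\{f\}\subseteq W_\sigma$ for every $\sigma$, compared against $\varepsilon=1/n^2$. The only difference is that you use the exact size $|W_\sigma|=w+2\mu+2$, which gives the constants $7/8$ and $3/8$. The paper instead uses the cruder bounds $|W_\sigma|\le n$ and $|W_\sigma|\ge|G|=\mu+1$, which give $1/n$ and $3/(\mu+1)$ and are equally sufficient.
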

\begin{proof}
  Note that $\exists$-voters assign utility $1$ to each $g\in G$ and non-negative utilities to all other candidates, so for an $\exists$-voter $\var_i$ we have $U_{\var_i}(W_\sigma)\ge \frac{1}{n}$. On the other hand, we have $U_{\var_i}(\{f\})=0$, so our claim follows.

  In contrast, $\forall$-voters assign utility $1$ to $f$, utility $2$ to their top candidate, and utility $0$ to all other candidates. Since $G\subseteq W_\sigma$ and $|G|=\mu+1$,  for a $\forall$-voter $\var_i$ we have $U_{\var_i}(W_\sigma)\le \frac{1+2}{\mu+1}$. Since $\mu\ge 3$, our claim follows.
\end{proof}

\section{Putting It All Together}
We first need to introduce some notation for partial assignments, i.e., assignments to $i<\nu$ variables in 
$\{x_1, \dots, x_\nu\}$.

Given a partial assignment $\sigma\in\{0, 1\}^i$ for $i\in\{0, \dots, \nu\}$, 
let $\Phi[\sigma] = Q_{i+1} x_{i+1}\dots Q_\nu x_\nu\bigwedge_{j=1}^\mu C_j[\sigma]$, where for each $j\in [\mu]$ the clause $C_j[\sigma]$ is obtained by replacing each occurrence of $x_\ell$, $\ell\le i$, with $\sigma_\ell$ and each occurrence of 
$\overline{x}_\ell$, $\ell\le i$, with $1-\sigma_\ell$.

Given two partial assignments 
$\sigma\in\{0, 1\}^i$ and $\sigma'\in\{0, 1\}^{i'}$ for $0\le i<i'\le \nu$, we say that $\sigma'$ {\em extends} $\sigma$ if $\sigma_\ell =\sigma'_\ell$ for all $\ell\in [i]$. Further, for $i\in [\nu]$ and a history
${\mathbf b} = b_{\var_1}, \dots, b_{\var_i}$, we define a partial assignment $\sigma^{\mathbf b}\in\{0, 1\}^i$ by $\sigma^{\mathbf b}_\ell=1$ if $b_{\var_\ell}=a_\ell$
    and $\sigma^{\mathbf b}_\ell=0$ if $b_{\var_\ell}=\bot$; we extend this definition to empty histories by setting $\sigma^{()}$ to be the empty assignment.
    
We are now ready to complete the proof of the theorem.

\begin{proof}[Proof of Theorem 1]
We will now prove the following claim, by backward induction on $i$.

\begin{claim}\label{clm:induction}
    For each $i\in [\nu]$ and every history ${\mathbf b} = (b_{\var_1}, \dots, b_{\var_{i-1}})$, the winner set in all SPNE of the continuation game $G({\mathbf b})$ is $W_\sigma$
    for some $\sigma$ that extends $\sigma^{\mathbf b}$ if the formula 
    $\Phi[\sigma^{\mathbf b}]$ is true 
    and $\{f\}$ if $\Phi[\sigma^{\mathbf b}]$ is false.
\end{claim}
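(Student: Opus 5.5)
We prove Claim~\ref{clm:induction} by backward induction on $i$, running from $i=\nu+1$ down to $i=1$. The base case is not literally covered by the indexing $i\in[\nu]$, so we treat it as an auxiliary statement: for a full history ${\mathbf b}=(b_{\var_1},\dots,b_{\var_\nu})$, the winner set of $G({\mathbf b})$ is $W_{\sigma^{\mathbf b}}$ if $\sigma^{\mathbf b}\models\Psi$ and $\{f\}$ otherwise. This is exactly Lemma~\ref{lem:formula}. It matches the form of the claim, because $\Phi[\sigma^{\mathbf b}]$ is then a closed quantifier-free formula, which is true if and only if $\sigma^{\mathbf b}\models\Psi$, and the only assignment extending $\sigma^{\mathbf b}$ is $\sigma^{\mathbf b}$ itself.

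For the inductive step, fix $i\in[\nu]$ and a history ${\mathbf b}=(b_{\var_1},\dots,b_{\var_{i-1}})$, and assume the statement for histories of length $i$. Voter $\var_i$ has two options. If she plays $a_i$, the history becomes ${\mathbf b}\cdot a_i$ with partial assignment $\sigma^{\mathbf b}1$. If she abstains, the history becomes ${\mathbf b}\cdot\bot$ with partial assignment $\sigma^{\mathbf b}0$.

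By the inductive hypothesis, each option yields either a set of the form $W_\sigma$ (for some $\sigma$ extending the corresponding partial assignment) or $\{f\}$. The first case occurs exactly when $\Phi[\sigma^{\mathbf b}1]$, respectively $\Phi[\sigma^{\mathbf b}0]$, is true. Note also that $\Phi[\sigma^{\mathbf b}]$ equals $Q_i x_i\,\Phi[\sigma^{\mathbf b}x_i]$ in the evident sense.

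We now compare outcomes using Lemma~\ref{lem:var}. For an $\exists$-voter, every $W_\sigma$ beats $\{f\}$ by more than $\varepsilon$. For a $\forall$-voter, $\{f\}$ beats every $W_\sigma$ by more than $\varepsilon$. The $\varepsilon$ margin is what makes the voting cost irrelevant to her choice. Two cases follow.

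\emph{Case $Q_i=\exists$.} If some branch leads to a $W_\sigma$, the voter picks such a branch. The only possible hesitation is when both branches give sets $W_\sigma$, $W_{\sigma'}$; then she picks whichever maximizes her utility. Either way the outcome is a $W_\sigma$ with $\sigma$ extending $\sigma^{\mathbf b}$, and this happens exactly when $\Phi[\sigma^{\mathbf b}]$ is true. If both branches lead to $\{f\}$, the outcome is $\{f\}$, and $\Phi[\sigma^{\mathbf b}]$ is false.

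\emph{Case $Q_i=\forall$.} If some branch yields $\{f\}$, the voter gets $\{f\}$, which she strictly prefers even net of cost. This happens exactly when $\Phi[\sigma^{\mathbf b}]$ is false. Otherwise both branches yield sets $W_\sigma$, she picks one of them, and the outcome is a $W_\sigma$ with $\sigma$ extending $\sigma^{\mathbf b}$.

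Since all SPNE share the same winner set, the statement holds for all SPNE of $G({\mathbf b})$.

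The only delicate points are two bookkeeping matters. First, the cost $\varepsilon$ can never flip a comparison between a $W_\sigma$ and $\{f\}$; this is guaranteed by the strict $\varepsilon$-margins in Lemma~\ref{lem:var}. Second, the claim's phrase "$W_\sigma$ for some $\sigma$ extending $\sigma^{\mathbf b}$" must be preserved when the voter chooses between two different sets $W_\sigma$; it is, because both candidate $\sigma$'s extend $\sigma^{\mathbf b}$.

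Once the claim holds at $i=1$ with the empty history, the theorem follows. If $\Phi$ is true, the SPNE winner set is some $W_\sigma\supseteq G\ni g_1$. If $\Phi$ is false, it is $\{f\}$, which does not contain $g_1$. Membership in $\PSPACE$ is Proposition~\ref{prop:inpspace}.
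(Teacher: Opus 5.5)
Your proof is correct and follows the paper's backward induction over the variable voters. The differences are cosmetic: you take the full-history case (Lemma~\ref{lem:formula}) as the base instead of repeating the one-step argument at $i=\nu$, and you state explicitly the $\varepsilon$-margins from Lemma~\ref{lem:var} that the paper uses only implicitly.

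Your closing appeal to ``all SPNE share the same winner set'' is unnecessary, and that fact is not guaranteed by assumption~\eqref{eq:distinct}, which the unperturbed utilities violate. The restriction of any SPNE to each branch is itself an SPNE of that subgame, so your case analysis already applies to every SPNE directly.
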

\begin{proof}
We first consider the base case $i=\nu$.

Suppose first that $Q_\nu=\exists$, i.e., $\var_\nu$ is an $\exists$-voter. If $\Phi[\sigma^{\mathbf b}]$ is true, then there is a way to set $x_\nu$
so that the resulting assignment satisfies all clauses.
Then $\var_{\nu}$ prefers to vote accordingly. Indeed, by Lemma~\ref{lem:formula} this would result in an outcome of the form $W_\sigma$, while voting in a way that does not satisfy some clause would result in $\{f\}$, and an $\exists$-voter prefers the former to the latter. 
If $\Phi[\sigma^{\mathbf b}]$ is false, 
then there is no way to set $x_\nu$ so as to satisfy all clauses, i.e., no matter how $\var_\nu$ votes, some clause is unsatisfied. By Lemma~\ref{lem:formula}, the resulting outcome is $\{f\}$.

Now, suppose that $Q_\nu=\forall$, i.e., $\var_\nu$ is a $\forall$-voter. If $\Phi[\sigma^{\mathbf b}]$ is true,
then, no matter how $\var_\nu$ votes, the assignment $\sigma$ that corresponds to the first $\nu$ votes satisfies all clauses and hence by Lemma~\ref{lem:formula} the resulting outcome is $W_\sigma$.
If $\Phi[\sigma^{\mathbf b}]$ is false, 
then for some value of $x_i$ some clause is not satisfied; $\var_\nu$ then prefers to vote accordingly, to ensure that the outcome is $\{f\}$. 

This completes the analysis for the base case. Now, suppose the claim has been proved for $i+1$; we will prove it for $i$.  Consider assignments $\sigma^0, \sigma^1\in \{0, 1\}^i$ that extend $\sigma^{\mathbf b}$
by setting the $i$-th bit to $0$ or $1$, 
respectively.
Again, the analysis depends on whether $Q_i=\exists$ or $Q_i=\forall$.

Suppose $Q_i=\exists$.
If $\Phi[\sigma^{\mathbf b}]$ is true, 
then at least one of the formulas
$\Phi[\sigma^0]$,  $\Phi[\sigma^1]$
is true. Hence, $\var_i$ can vote in such a way that for the resulting partial assignment $\sigma'\in\{\sigma^0, \sigma^1\}$ it holds that $\Phi[\sigma']$ is true.
By the inductive hypothesis, this results in a winner set $W_\sigma$ where $\sigma$ extends $\sigma'$. If $\var_i$ votes so that for the resulting partial assignment $\sigma'$ the formula $\Phi[\sigma']$ is false, 
by the inductive hypothesis, the eventual winner set is $\{f\}$. Since $\var_i$ prefers the former outcome to the latter,
she votes so that the winner set is some $W_\sigma$.

If $Q_i=\exists$ and $\Phi[\sigma^{\mathbf b}]$ is false, 
then, no matter how $\var_i$ votes, for the resulting partial assignment $\sigma'\in\{\sigma^0, \sigma^1\}$ the formula $\Phi[\sigma']$ is false. By the inductive hypothesis, this means that the outcome is~$\{f\}$.

Now, suppose that $Q_i=\forall$. If $\Phi[\sigma^{\mathbf b}]$ is true, 
then, no matter how $\var_i$ votes, for the resulting partial assignment $\sigma'\in\{\sigma^0, \sigma^1\}$ the formula $\Phi[\sigma']$ is true. By the inductive hypothesis, this means that the outcome is $W_\sigma$ for some $\sigma$ that extends $\sigma'$.

Finally, if $Q_i=\forall$ and $\Phi[\sigma^{\mathbf b}]$ is false,
then $\var_i$ can vote so that for the resulting partial assignment $\sigma'\in\{\sigma^0, \sigma^1\}$ the formula $\Phi[\sigma']$ is false. By the inductive hypothesis, this means that the outcome is~$\{f\}$.
\end{proof}
By taking $i=1$ in Claim~\ref{clm:induction}, we conclude that the winner set in all SPNE of our game is $W_\sigma$ for some $\sigma\in\{0, 1\}^\nu$ if $\Phi$ is true and $\{f\}$ if $\Phi$ is false.
This establishes the correctness of our reduction. 

It remains to explain how to modify our instance
to make it satisfy assumptions~\eqref{eq:distinct}
and \eqref{eq:cost}.
To this end, we pick a small enough 
rational value $\delta\in (0, 1)$ (e.g., $\delta=\frac{1}{8n^2}$ will work), make each voter rank the candidates in some way that 
is consistent with their utility function (so that $U_v(c)>U_v(c')$ implies that $v$ ranks $c$ above $c'$) and so that they rank their permitted candidates above the non-permitted candidates, and then,  
for each voter $v$, add $\delta^i$ to the utility that $v$ assigns to the candidate it ranks in the $i$-th position. We then set $\varepsilon = \frac{\delta^{n+1}}{n^2}$.  
It can be verified that for sufficiently small $\delta$ this modification does not change the argument, and the size of the modified instance is polynomial in the size of the original instance, and the complexity result holds under assumptions \eqref{eq:distinct} and \eqref{eq:cost}.
Together with Proposition~\ref{prop:inpspace}, this completes the proof.
\end{proof}
\begin{remark}
 In the voting game we constructed, all SPNE have the same winner set. This is because the set of possible outcomes consists of $\{f\}$ and sets of the form $W_\sigma$. No voter assigns the same value to $\{f\}$ and an outcome of the form $W_\sigma$. If a voter 
 has two actions such that one action results in $W_\sigma$ while the other action results in $W_{\sigma'}$ with 
 $\sigma\neq\sigma'$, then either 
 the voter is not indifferent between
 $W_\sigma$ and $W_{\sigma'}$ (and the difference in values exceeds the cost of voting), 
 or one of the two actions is abstention (and hence has a lower cost).
 However, there may be multiple SPNE: e.g., for a clause voter, it may be the case that abstention is suboptimal, but both voting for her clause candidate and voting for $f$ results in $\{f\}$ being the winner set.
 
 The proof of Theorem~1 goes through even if $\varepsilon=0$, i.e., voting is costless.
\end{remark}
We note that our result does not fully resolve the conjecture of \citet{DE10}, as their conjecture applies to unrestricted ballots (i.e., when every voter can vote for every candidate). We strongly believe that this problem is $\PSPACE$-hard as well;
however, we expect the proof to be less elegant than the one presented here.

\bibliographystyle{plainnat}
\bibliography{ref}
\end{document}